\def\BibTeX{{\rm B\kern-.05em{\sc i\kern-.025em b}\kern-.08em
    T\kern-.1667em\lower.7ex\hbox{E}\kern-.125emX}}
\newtheorem{fact}[theorem]{Fact}
\begin{document}
\title{Efficient Maintenance of Distance Labelling for Incremental Updates in Large Dynamic Graphs}
  
\author{Muhammad Farhan}
\affiliation{%
  \institution{Australian National University}
  \streetaddress{P.O. Box 1212}
  \city{Canberra}
  \state{Australia}
  \postcode{43017-6221}
}
\email{muhammad.farhan@anu.edu.au}

\author{Qing Wang}
\orcid{0000-0002-1825-0097}
\affiliation{%
  \institution{Australian National University}
  \streetaddress{1 Th{\o}rv{\"a}ld Circle}
  \city{Canberra}
  \country{Australia}
}
\email{qing.wang@anu.edu.au}

\begin{abstract}
Finding the shortest path distance between an arbitrary pair of vertices is a fundamental problem in graph theory. A tremendous amount of research has been successfully attempted on this problem, most of which is limited to static graphs. Due to the dynamic nature of real-world networks, there is a pressing need to address this problem for dynamic networks undergoing changes. In this paper, we propose an \emph{online incremental} method to efficiently answer distance queries over very large dynamic graphs. Our proposed method incorporates incremental update operations, i.e. edge and vertex additions, into a highly scalable framework of answering distance queries. We theoretically prove the correctness of our method and the preservation of labelling minimality. We have also conducted extensive experiments on 12 large real-world networks to empirically verify the efficiency, scalability, and robustness of our method. 
\end{abstract}

%
%



\maketitle

\section{INTRODUCTION}\label{section:intro}

Given a very large graph with billions of vertices and edges, how efficiently can we find the shortest path distance between any two vertices? If such a graph is dynamically changing over time (e.g. inserting edges or vertices), how can we not only efficiently but also accurately find the shortest path distance between any two vertices? These questions are intimately related to distance queries on dynamic graphs. As one of the most fundamental operations on graphs, distance queries have a wide range of real-world applications that operate on increasingly large dynamic graphs, such as context-aware search in web graphs \cite{ukkonen2008searching}, social network analysis in social networks \cite{vieira2007efficient,backstrom2006group}, management of resources in computer networks \cite{boccaletti2006complex}, and so on. Many of these applications use distance queries as a building block to realise more complicated tasks, and require distance queries to be answered instantly, e.g. in the order of milliseconds. 

\begin{figure}[t!]
\centering
\includegraphics[width=0.4\textwidth]{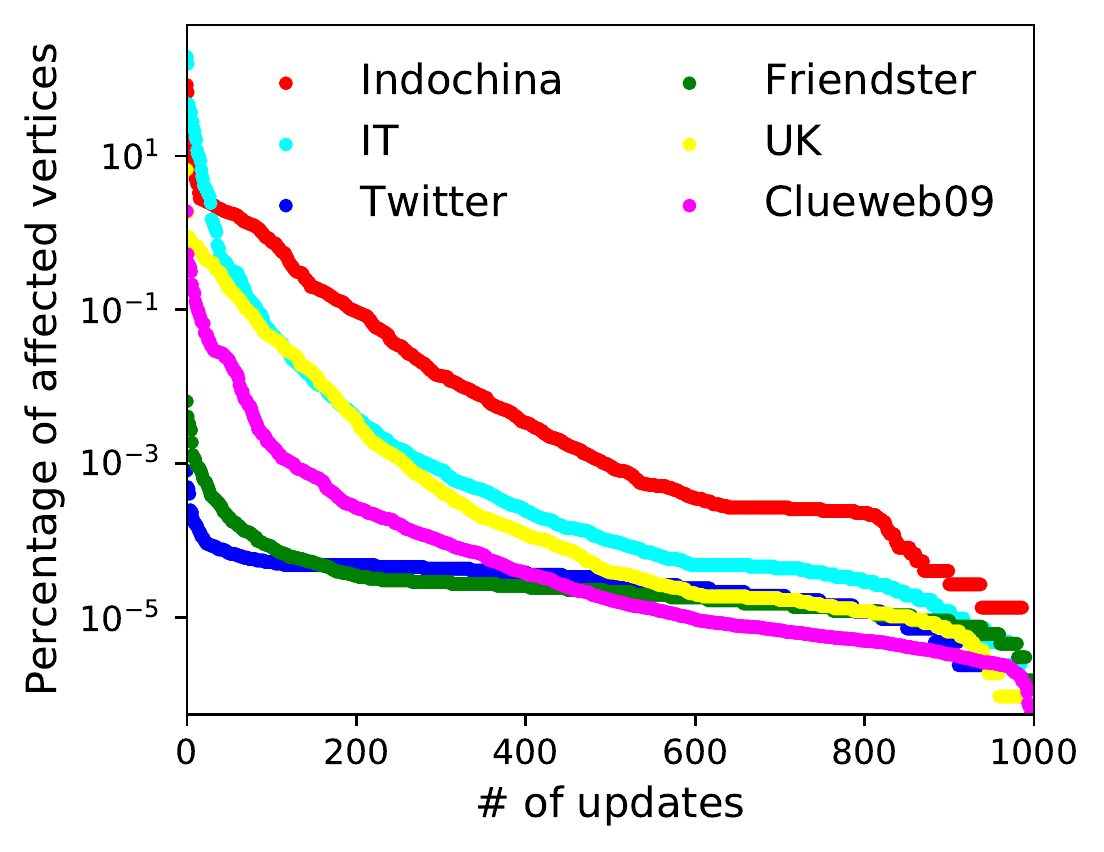}\vspace{-0.2cm}
\caption{Distribution of affected vertices by a single graph change in various networks, where the results for 1000 graph changes are sorted in the descending order.}
\label{fig:percentage_av}\vspace{-0.3cm}
\end{figure}

Previous studies have primarily focused on distance queries on static graphs \cite{akiba2013fast,fu2013label,jin2012highway,abraham2011hub,abraham2012hierarchical,wei2010tedi,farhan2018highly}, with little attention being paid to dynamics on graphs. To speed up query response time, a key technique is to precompute a data structure called \emph{distance labelling} that satisfies certain properties such as 2-hop cover \cite{cohen2003reachability}, and then use this data structure to answer distance queries efficiently. However, when a graph dynamically changes, its distance labelling needs to be changed accordingly; otherwise, distance queries may yield overestimated distances. Although it is possible to recompute a distance labelling from scratch, this leads to inefficiency. As shown in Figure \ref{fig:percentage_av}, the percentage of affected vertices by a single change often ranges from $10^{-5}\%$ to $10\%$ in various real-world networks, 
recomputing distance labelling from scratch for each single change not only wastes computing resources, but also may generate inaccurate query results during recomputing process. The question arising is thus how to efficiently and accurately change distance labelling on dynamic graphs in order to support distance queries?

In this paper, we aim to develop an \emph{online incremental} method that can dynamically maintain distance labelling on graphs being changed by edge and vertex insertions. Typically, real-world dynamic networks are more vulnerable to insertions than removals and a plethora of such real-world networks are large and frequently updated, primarily accommodating insertions \cite{leskovec2007graph,viswanath2009evolution}. Thus, an online incremental method for dynamic graphs should possess the following desirable characteristics: (1) \emph{time efficiency} - It can answer distance queries and update distance labelling efficiently (in the order of milliseconds); 2) \emph{space efficiency} - It guarantees the minimum size of distance labelling to reduce storage costs; (3) \emph{scalability} - It can scale to very large networks with billions of vertices and edges. 

\vspace{0.2cm}
\noindent{\textbf{Challenges.~}}Designing online incremental methods for distance queries on dynamic graphs is known to be challenging \cite{akiba2014dynamic}. When an edge or a vertex is inserted into a graph, outdated and redundant entries of distance labelling may occur. It was reported that removing such entries is a complicated task \cite{akiba2014dynamic} because affected vertices need to be precisely identified so as to update their labels without violating the original properties of a distance labelling such as minimality. Further, although query time and update time are both critical for answering distance queries on dynamic graphs, it is not easy (if not impossible) to design a solution that is efficient in both. This requires us to find new insights into dynamic properties of a distance labelling, as well as a good trade-off between query time and update time.
Last but not least, scaling distance queries to dynamic graphs with billions of nodes and edges is hard. Previous work \cite{akiba2014dynamic,hayashi2016fully} mostly considered 2-hop labelling, which has very high space requirements and index construction time; as a result, their query and update performance are dramatically degraded on large-scale dynamic graphs. Ideally, the labelling size of a graph should be much smaller than its original size. However, the state-of-the-art distance labelling technique, i.e. pruned landmark labeling method (PLL) \cite{akiba2014dynamic}, still yields a distance labelling whose size is 20-30 times larger than the original size of a dataset.

\vspace{0.2cm}
\noindent{\textbf{Contributions.}} Our contributions are summarised as follows:

\vspace{-0cm}
\begin{itemize}
\item Our method overcomes the challenge of eliminating outdated and redundant distance entries. None of the previous studies have addressed this challenge because detecting those entries is too costly \cite{akiba2014dynamic,d2019fully}. When an edge or a vertex is inserted, previous studies only add new distance entries or modify existing distance entries. This would however lead to an ever increasing size of labelling, particularly when a graph is frequently updated by newly added edges or vertices. Accordingly, both query performance and space efficiency would deteriorate over time. 

\item We prove the correctness of our proposed method and show that it preserves the desirable property of minimality on our distance labelling. Due to a property called highway cover \cite{farhan2018highly}, the minimal size of a distance labelling in this work is much smaller than the size of a 2-hop labelling in previous work \cite{akiba2014dynamic,hayashi2016fully}. Preserving minimality on a distance labelling thus improves space efficiency and query performance, as well as update performance. We also provide a complexity analysis of our proposed method.

\item We conducted experiments using 12 real-world large networks across different domains to show the efficiency, scalability and robustness of our method. Particularly, our method can perform updates under one second, on average, even on billion-scale networks, while still answering queries efficiently in the order of milliseconds and guaranteeing the labelling size of a graph to be much smaller. 
\end{itemize}

\section{RELATED WORK}\label{section:background}

Answering shortest-path distance queries in graphs has been an active research topic for many years. Traditionally, a distance query can be answered using Dijkstra's algorithm \cite{tarjan1983data} on positively weighted graphs or Breadth-First Search (BFS) algorithm on unweighted graphs. However, these traditional algorithms fail to achieve desired response time for distance queries on large graphs. Later, labelling-based methods have emerged as an attractive way of accelerating response time to distance queries \cite{cohen2003reachability,akiba2013fast,jin2012highway,fu2013label,abraham2012hierarchical,abraham2011hub,farhan2018highly}, among which Akiba et al. \cite{akiba2013fast} proposed a pruned landmark labeling (PLL) to precompute a 2-hop cover distance labelling \cite{cohen2003reachability}. 
This method serves as the state-of-the-art for labelling-based distance queries and can handle graphs with hundreds of millions of edges.

So far, only a few attempts have been made to study distance queries over dynamic graphs \cite{akiba2014dynamic,hayashi2016fully}, which are all based on the idea of 2-hop distance labelling or its variants. Akiba et al. \cite{akiba2014dynamic} studied the problem of updating a pruned landmark labelling for incremental updates (i.e. vertex additions and edge additions). This work however does not remove redundant entries in distance labels because the authors considered that detecting such outdated entries is too costly. This inevitably breaks the minimality of pruned landmark labelling, leading to an ever increase of labelling size and deteriorated query performance over time. To accelerate shortest-path distance queries on large networks, another line of research is to combine a partial distance labelling with online shortest-path searches. Hayashi et al. \cite{hayashi2016fully} proposed a fully dynamic approach that selects a small set of landmarks $R$ and precompute a shortest-path tree (SPT) rooted at each $r \in R$. Then, an online search is conducted on a sparsified graph under an upper distance bound being computed via the SPTs. Nevertheless, this method still fails to construct labelling on networks with billions of vertices. Following the same line, a recent work by Farhan et al. \cite{farhan2018highly} introduced a highway-cover labelling method (HL), which can provide fast response time (milliseconds) for distance queries even on billion-scale graphs. However, this approach only works for static graphs. 
\section{Problem Formulation}
Let $G = (V, E)$ be an undirected graph where $V$ is a set of vertices and $E$ is a set of edges. We denote by $N(v)$ the set of neighbors of a vertex $v \in V$, i.e. $N(v) = \{u \in V | (u, v) \in E \}$. Given two  vertices $u$ and $v$ in $G$, the \emph{distance} between $u$ and $v$, denoted as $d_G(u, v)$, is the length of the shortest path from $u$ to $v$. If there does not exist a path from $u$ to $v$, then $d_G(u, v) = \infty$. We use $P_{G}(u,v)$ to denote the set of all shortest paths between $u$ and $v$ in $G$.
Given a graph $G = (V, E)$, an \emph{edge insertion} is to add an edge $(a, b)$ into $G$ where $\{a,b\}\subseteq V$ and $(a, b)\notin E$. Accordingly, a \emph{node insertion} is to add a new node into $G$ together with a set of edge insertions that connect $v$ to existing vertices in $G$. 
The following fact is critical for designing algorithms for an edge insertion. 

\begin{fact}\label{fact-insertion}
Let $G'=(V,E\cup\{(u,v)\})$ be the graph after inserting an edge $(u, v)$ into $G=(V,E)$. Then for any two vertices $s,t\in V$, $d_G(s, t) \geq d_{G'}(s, t)$.
\end{fact}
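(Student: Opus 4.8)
The plan is to argue by a simple monotonicity-of-paths argument: inserting an edge never destroys any pre-existing path, so every $s$--$t$ path available in $G$ remains available in $G'$. Since the shortest-path distance is just the minimum length taken over the set of available paths, enlarging that set can only decrease (or leave unchanged) the minimum. This gives $d_{G'}(s,t) \le d_G(s,t)$ immediately.

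First I would fix arbitrary $s,t \in V$ and write $E' = E \cup \{(u,v)\}$, so that $G' = (V, E')$. The crucial observation is the containment $E \subseteq E'$. From this it follows that any path $\pi$ from $s$ to $t$ in $G$, being composed solely of edges in $E$, is also a valid path from $s$ to $t$ in $G'$, and its length is identical in both graphs. Hence the set of all $s$--$t$ paths in $G$ is a subset of the set of all $s$--$t$ paths in $G'$.

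Next I would pass to the minimum over path lengths. By definition $d_G(s,t)$ is the least length among all $s$--$t$ paths in $G$, and $d_{G'}(s,t)$ is the least length among all $s$--$t$ paths in $G'$. Because the former set is contained in the latter, the minimum taken over the larger set in $G'$ cannot exceed the minimum taken over the smaller set in $G$, yielding $d_{G'}(s,t) \le d_G(s,t)$.

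The only point requiring a little care---and the closest thing to an obstacle in what is otherwise an immediate argument---is the degenerate case where $s$ and $t$ are disconnected in $G$, so that $d_G(s,t) = \infty$. Here the subset inclusion still holds (vacuously if $G$ has no $s$--$t$ path at all), but it is cleanest to handle this separately by noting that $\infty \ge d_{G'}(s,t)$ trivially, regardless of whether the insertion of $(u,v)$ newly connects $s$ and $t$. Combining the connected and disconnected cases establishes $d_G(s,t) \ge d_{G'}(s,t)$ for every pair $s,t \in V$, as claimed.
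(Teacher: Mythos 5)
Your proof is correct: the edge-set inclusion $E \subseteq E'$ gives path-set inclusion, hence the minimum length can only drop, and your separate handling of the $d_G(s,t)=\infty$ case is sound. The paper states this fact without proof, treating it as immediate, and your monotonicity argument is exactly the standard justification it leaves implicit, so there is nothing to reconcile.
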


That is, the distance between any two vertices never increases after inserting edges or vertices in a graph. 

\vspace{0.15cm}
\noindent\textbf{Highway cover labelling.~}Unlike the previous work \cite{akiba2014dynamic,hayashi2016fully, d2019fully} that uses 2-hop cover labelling \cite{cohen2003reachability}, we develop our method using a highly scalable labelling approach, called \emph{highway cover labelling} \cite{farhan2018highly}.
Let $R\subseteq V$ be a small set of \emph{landmarks} in a graph $G=(V,E)$. For each vertex $v \in V$, the \emph{label} of $v$ is a set of \emph{distance entries} $L(v)= \{(r_1, \delta_L(r_1, v)), \dots, (r_n, \delta_L(r_n,v))\}$, where $r_i \in R$ and $\delta_L(r_i, v) = d_G(r_i, v)$. We call $L = \{L(v)\}_{v \in V}$ a \emph{distance labelling} over $G$ whose \emph{size} is defined as: $size(L)=\sum_{v\in V}|L(v)|$. A \emph{highway} $H=(R, \delta_H)$ consists of a set $R$ of landmarks and a distance decoding function $\delta_H : R \times R \rightarrow \mathbb{N}^+$ such that, for any two landmarks $r_1, r_2\ \in R$,  $\delta_H(r_1, r_2) = d_G(r_1, r_2)$ holds.

\begin{definition}\label{def:highway-cover}
A \emph{highway cover labelling} is a pair $\Gamma=(H, L)$ where $H$ is a highway and $L$ is a distance labelling s.t. for any vertex $v \in V\backslash R$ and $r\in R$, we have:
\begin{align}
d_G(r, v) = \texttt{min}\{\delta_L(r_i, v) + \delta_H(r, r_i) | (r_i, \delta_L(r_i, v)) \in  L(v)\}.
\end{align}
\end{definition}

Highway cover labelling enjoys several nice theoretical properties, such as minimality and order independence. A minimal highway cover labelling can be efficiently constructed, independently of the order of applying landmarks \cite{farhan2018highly}.

Given a highway cover labeling $\Gamma=(H, L)$, an upper bound on the distance between any two vertices $u, v \in V \backslash R$ is computed:
\begin{align}\label{eq:upper-distance-bound}
d^{\top}_{uv} = \texttt{min}\{\delta_L(r_i, u) + \delta_H(r_i, r_j) + \delta_L(r_j, v) |\notag\hspace{1.3cm}\\(r_i, \delta_L(r_i, u)) \in  L(u), (r_j, \delta_L(r_j, v)) \in  L(v)\} 
\end{align}

An exact distance query $Q(u,v,\Gamma)$ can be answered by conducting a distance-bounded shortest-path search over a sparsified graph $G[V \backslash R]$ (i.e., removing all landmarks in $R$ from $G$) under the upper bound $d^{\top}_{uv}$ such that:
\[
    Q(u,v,\Gamma)= 
\begin{cases}
    d_{G[V \backslash R]}(u,v)& \text{if } d_{G[V \backslash R]}(u,v)\leq d^{\top}_{uv},\\
    d^{\top}_{uv}              & \text{otherwise}.
\end{cases}
\]

\vspace{0.15cm}
\noindent\textbf{Problem definition.~}In this work, we study the problem of answering distance queries over a graph that is dynamically changed by edge and vertex insertions over time. Since a vertex insertion can be treated as a set of edge insertions, without loss of generality, below we define the problem based on edge insertions.
\begin{definition}\label{prob:realtime}
Let $G\hookrightarrow G'$ denote that a graph $G$ is changed to a graph $G'$ by an edge insertion. The \emph{dynamic distance querying} problem is, given any two vertices $u$ and $v$ in the changed graph $G'$, to efficiently compute the distance $d_{G'}(u,v)$.
\end{definition}

\begin{figure*}[ht]
\centering
\includegraphics[width=0.9\textwidth]{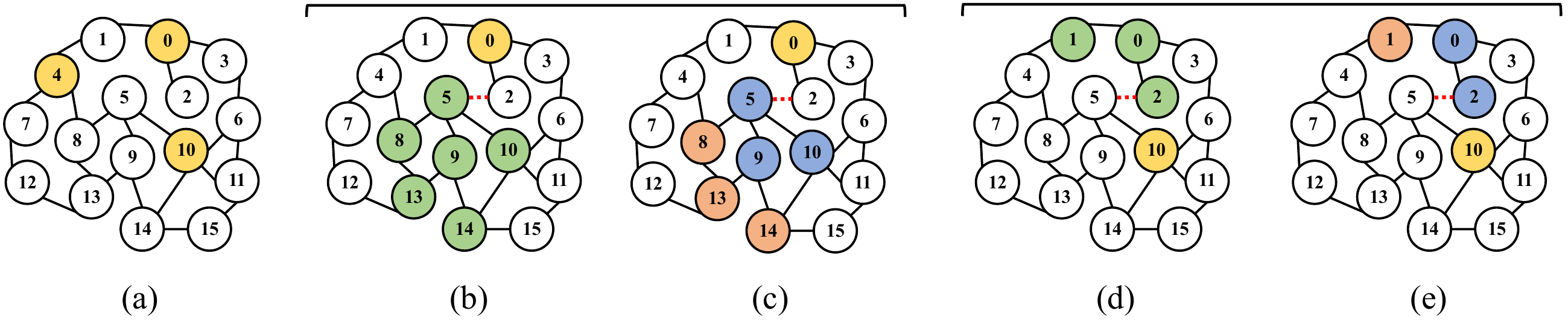}\vspace{-0.5cm}
\caption{An illustration of our online incremental algorithm $\textsc{IncHL}^{+}$: (a) a graph with three landmarks $0$, $4$ and $10$ (colored in yellow); (b) and (d) the BFSs for finding affected vertices (colored in green) w.r.t. landmarks $0$ and $10$, respectively; (c) and (e) the BFSs for repairing affected vertices w.r.t. landmarks $0$ and $10$, respectively, where vertices with added/modified entries are colored in blue, and vertices with removed entries are colored in red.}
\label{fig:incremental-algorithm}\vspace{-0.2cm}
\end{figure*}

\section{Online Incremental Algorithm}\label{subsection:addition}
In this section, we propose an algorithm $\textsc{IncHL}^{+}$ to incrementally update labelling to reflect graph changes. Algorithm \ref{algo:inc-algo} describes the main steps of $\textsc{IncHL}^{+}$. Below, we discuss them in detail.

\subsection{Finding Affected Vertices}
When an update operation occurs on a graph $G=(V,E)$, there exists a subset of ``affected'' vertices in $V$ whose labels need to be updated as a consequence of this update operation on the graph. 

\begin{definition}\label{def:affected_vertex}
A vertex $v\in V$ is \emph{affected} by $G \hookrightarrow G'$ iff $P_{G}(v, r)\not=P_{G'}(v,r)$ for at least one $r\in R$; \emph{unaffected} otherwise.
\end{definition}

We use $\Lambda_r$ to denote the set of all affected vertices w.r.t. a landmark $r$ and $\Lambda=\bigcup_{r\in R}\Lambda_r$ the set of all affected vertices. 

\begin{example}
Consider Figure~\ref{fig:incremental-algorithm}(a) in which $0$ and $10$ are two landmarks. After inserting an edge $(2,5)$, $\Lambda_0=\{5, 8, 9, 10, 13, 14\}$ in Figure~\ref{fig:incremental-algorithm}(b) and $\Lambda_{10}=\{0, 1, 2\}$ in Figure~\ref{fig:incremental-algorithm}(d). 
\end{example}

The following lemma states how affected vertices relate to an edge being inserted.
\begin{lemma}\label{lemma:aff_vs_insertion}
When $G \hookrightarrow G'$ for an edge insertion $(a,b)$, a vertex $v\in \Lambda_r$ iff there exists a shortest path between $v$ and $r$ in $G'$ passing through $(a, b)$. 
\end{lemma}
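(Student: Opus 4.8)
The plan is to prove the two implications of the biconditional separately, using only Fact~\ref{fact-insertion} together with the elementary observation that an edge insertion is monotone on the edge set: since $E\subseteq E\cup\{(a,b)\}$, every walk in $G$ is still a walk of the same length in $G'$, and the \emph{only} edge available in $G'$ but not in $G$ is $(a,b)$. Throughout I would read ``$v\in\Lambda_r$'' via Definition~\ref{def:affected_vertex} simply as $P_{G}(v,r)\neq P_{G'}(v,r)$.

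For the backward direction I would argue directly and it is the easy half. Suppose some shortest path $p\in P_{G'}(v,r)$ passes through the new edge $(a,b)$. Because $(a,b)\notin E$, the path $p$ is not a path in $G$ at all, so $p\notin P_{G}(v,r)$; yet $p\in P_{G'}(v,r)$ by assumption. Hence the two shortest-path sets differ and $v\in\Lambda_r$.

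For the forward direction I would assume $P_{G}(v,r)\neq P_{G'}(v,r)$ and split on whether the distance strictly decreases, invoking $d_{G'}(v,r)\le d_G(v,r)$ from Fact~\ref{fact-insertion}. If $d_{G'}(v,r)<d_G(v,r)$, then any path in $G'$ that avoids $(a,b)$ lies entirely within $G$ and thus has length at least $d_G(v,r)>d_{G'}(v,r)$, so it cannot be shortest in $G'$; therefore every $p\in P_{G'}(v,r)$ uses $(a,b)$, and such a $p$ exists. If instead $d_{G'}(v,r)=d_G(v,r)$, then since each path of $G$ survives in $G'$ with unchanged length we have $P_{G}(v,r)\subseteq P_{G'}(v,r)$, and the assumed strict inequality of the sets forces some $p\in P_{G'}(v,r)\setminus P_{G}(v,r)$; as $p$ is present in $G'$ but absent from $G$, it must traverse the unique new edge $(a,b)$, which is exactly the desired shortest path. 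The proof is not deep, so the only real subtlety—and the place I would be most careful—is this distance-preserved case: one must explicitly use the inclusion $P_{G}(v,r)\subseteq P_{G'}(v,r)$ and the fact that $(a,b)$ is the sole edge distinguishing $G'$ from $G$ to guarantee that the newly created shortest path necessarily routes through $(a,b)$.
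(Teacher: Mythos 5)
Your proof is correct. Note first that the paper states Lemma~\ref{lemma:aff_vs_insertion} without any proof---it is treated as an immediate consequence of Definition~\ref{def:affected_vertex}---so there is no printed argument to compare yours against; you have supplied the missing justification. Both of your implications are sound. The backward direction is indeed immediate: a shortest path of $G'$ that uses $(a,b)$ cannot be a path of $G$, so the two shortest-path sets differ. For the forward direction, your case split is exhaustive by Fact~\ref{fact-insertion}, and the strict-decrease case is airtight (any $(a,b)$-avoiding path of $G'$ is a path of $G$, hence too long to be shortest, and $P_{G'}(v,r)\neq\emptyset$ since $d_{G'}(v,r)<d_G(v,r)$ forces $d_{G'}(v,r)<\infty$). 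The one step that deserves to be written out explicitly is exactly the one you flagged in the equal-distance case: choosing $p\in P_{G'}(v,r)\setminus P_G(v,r)$ only tells you that $p$ is not a \emph{shortest} path of $G$; to conclude that $p$ is ``absent from $G$'' as a path, observe that $p$ has length $d_{G'}(v,r)=d_G(v,r)$, so if $p$ were a path of $G$ at all it would automatically be a shortest path of $G$, contradicting $p\notin P_G(v,r)$. Hence $p$ is not a $G$-path and must traverse the unique new edge $(a,b)$. With that sentence included, your argument is a complete and correct proof of the lemma.
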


Following Lemma \ref{lemma:aff_vs_insertion}, we can reduce the search space of affected vertices by eliminating landmarks $r$ with $d_{G}(r,a) = d_{G}(r,b)$ since $\Lambda_r=\emptyset$ in such a case. Thus, we assume that $d_G(r, b) > d_G(r, a)$ w.r.t. a landmark $r$ in the rest of this section w.l.o.g.
Further, by the lemma below, we can also reduce the search space by ``jumping'' from the root of a BFS to vertex $b$.

\begin{lemma}\label{traverse-identifying-affected}
When $G \hookrightarrow G'$ with an inserted edge $(a,b)$, we have $d_G(v,r) \geq d_G(a,r)+1$ for any affected vertex $v\in \Lambda_r$.
\end{lemma}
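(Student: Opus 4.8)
The plan is to combine the characterization of affected vertices from Lemma~\ref{lemma:aff_vs_insertion} with the monotonicity of distances under edge insertion (Fact~\ref{fact-insertion}). Fix a landmark $r$ and an affected vertex $v\in\Lambda_r$. By Lemma~\ref{lemma:aff_vs_insertion}, there is a shortest $v$--$r$ path $P$ in $G'$ that passes through the inserted edge $(a,b)$; its length equals $d_{G'}(v,r)$. Since $G'$ is unweighted, any shortest path is simple and therefore traverses $(a,b)$ exactly once. This lets me decompose $P$ as $P=P_1\cdot(f,e)\cdot P_2$, where $\{f,e\}=\{a,b\}$, the subpath $P_1$ runs from $v$ to the first-visited endpoint $f$, and the subpath $P_2$ runs from the second endpoint $e$ to $r$ without reusing $(a,b)$.

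The key observation is that $P_2$ lies entirely in the original graph $G$: because $G'$ differs from $G$ only by the edge $(a,b)$ and $P_2$ avoids that edge, $P_2$ is an $e$--$r$ walk in $G$, so $|P_2|\ge d_G(e,r)$. Invoking the standing assumption $d_G(r,a)<d_G(r,b)$ (justified just before the lemma), we have $d_G(e,r)\ge\min\{d_G(a,r),d_G(b,r)\}=d_G(a,r)$ regardless of which endpoint is visited second. Hence
\[
d_{G'}(v,r)=|P_1|+1+|P_2|\ge 0+1+d_G(a,r)=d_G(a,r)+1.
\]
Finally, Fact~\ref{fact-insertion} gives $d_G(v,r)\ge d_{G'}(v,r)$, and chaining the two inequalities yields $d_G(v,r)\ge d_G(a,r)+1$, as required.

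I expect the only delicate point to be the claim that the post-crossing segment $P_2$ does not use the inserted edge again, which is exactly what lets me bound $|P_2|$ by a distance in $G$ rather than in $G'$; this rests on the simplicity of shortest paths in unweighted graphs. A pleasant feature of the argument is that working with $\min\{d_G(a,r),d_G(b,r)\}$ sidesteps any case analysis on the direction in which $P$ crosses $(a,b)$, so the assumption $d_G(r,a)<d_G(r,b)$ is used only to identify this minimum with $d_G(a,r)$.
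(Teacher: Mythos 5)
Your proof is correct, and it reaches the bound by a somewhat different route than the paper. Both arguments start from Lemma~\ref{lemma:aff_vs_insertion} (a $G'$-shortest $v$--$r$ path crossing $(a,b)$) and both need distance monotonicity to pass from $d_{G'}(v,r)$ to $d_G(v,r)$, but the key middle step differs. The paper splits the path at the vertex $a$ and asserts that $a$ is \emph{unaffected}, so that the suffix of the path from $a$ to $r$ still has length $d_G(a,r)$ after the insertion; combined with $d_{G'}(v,a)\geq 1$ this gives the claim. That unaffectedness assertion is left unproved, and under the standing assumption $d_G(r,a)<d_G(r,b)$ it genuinely requires an argument --- essentially the one implicit in your own proof: a simple $a$--$r$ path in $G'$ through $(a,b)$ must use that edge first, so its length is at least $1+d_G(b,r)\geq d_G(a,r)+2$, which by Fact~\ref{fact-insertion} rules it out as a shortest path. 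You avoid relying on this assertion altogether: splitting at the edge crossing rather than at $a$, you note that the post-crossing segment $P_2$ avoids $(a,b)$ by simplicity of shortest paths, hence lies entirely in $G$, and is therefore bounded below by $d_G(e,r)\geq\min\{d_G(a,r),d_G(b,r)\}=d_G(a,r)$. What your version buys is self-containedness (no unproved claim, no case analysis on the crossing direction, no need to observe $v\neq a$) and an explicit invocation of Fact~\ref{fact-insertion}, which the paper leaves implicit; what the paper's version buys, once the unaffectedness of $a$ is justified, is brevity and the slightly sharper identity $d_{G'}(v,r)=d_{G'}(v,a)+d_G(a,r)$.
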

\begin{proof} 
By Lemma \ref{lemma:aff_vs_insertion}, there exists a shortest path from any affected vertex $v$ to $r$ going through the edge $(a, b)$ and thus through $a$. Since $a$ is unaffected and the distance from $a$ to $v$ is equal to or greater than 1, $d_G(v, r) \geq d_G(a, r) + 1$ thus holds.
\end{proof}
 
Algorithm \ref{algo:algo-affected} describes our algorithm for finding affected vertices. Given a graph $G$ with an inserted edge $(a, b)$ and a highway cover labelling $\Gamma=(H,L)$ over $G$, we conduct a \emph{jumped} BFS w.r.t. a landmark $r$ starting from the vertex $b$ with its new depth $\pi = Q(r, a, \Gamma) + 1$ (Lines 3-4). For every $(v, \pi) \in \mathcal{Q}$, we enqueue all the neighbors of $v$ that are affected into $\mathcal{Q}$ with new distances $\pi + 1$ (Lines 7-8) and add $v$ to $\Lambda_r$ as affected vertex (Line 9). This process  continues until $\mathcal{Q}$ is empty.

\begin{example}
Figure~\ref{fig:incremental-algorithm} illustrates how our algorithm finds affected vertices as a result of inserting an edge $(2,5)$. The BFS rooted at landmark $0$ is depicted in Figure \ref{fig:incremental-algorithm}(b), which jumps to vertex $5$ and finds six affected vertices $\{5, 8, 9, 10, 13, 14\}$. Similarly, the BFS rooted at landmark $10$ is depicted in Figure \ref{fig:incremental-algorithm}(d), which jumps to vertex $2$ and finds three affected vertices $\{0, 1, 2\}$.
\end{example}

\begin{algorithm}[!t]
\caption{Incremental algorithm ($\textsc{IncHL}^{+}$).}
\label{algo:inc-algo}
\KwIn{$G$, $G'$, $(a, b)$, $\Gamma=(H,L)$}
\KwOut{$\Gamma'=(H',L')$}

\ForEach{$r \in R$}{
    $\Lambda_{r} \gets \textsc{FindAffected}(G, (a, b), r, \Gamma)$ \\
    $\textsc{RepairAffected}(G', (a, b), \Lambda_{r}, r, \Gamma)$
}
\end{algorithm}

\begin{algorithm}[!t]
\caption{Finding affected vertices.}
\label{algo:algo-affected}
\SetKwFunction{FMain}{\textsc{FindAffected}}
\SetKwProg{Fn}{Function}{}{end}
\Fn{\FMain{$G$, $(a, b)$, $r$, $\Gamma$}}{

    $\mathcal{Q} \gets \emptyset$, $\Lambda_{r} \gets \emptyset$ \\
    $\pi \gets Q(r, a, \Gamma)+1$ \\
    Enqueue $(b, \pi)$ to $\mathcal{Q}$ \\
    \While{$\mathcal{Q}$ is not empty}{
        Dequeue $(v, \pi)$ from $\mathcal{Q}$ \\
        
        \ForEach{$w \in N(v)$ s.t. $Q(r, w, \Gamma) \geq \pi+1$}{
            Enqueue $(w, \pi+1)$ to $\mathcal{Q}$
        }
        $\Lambda_{r} = \Lambda_{r} \cup \{v\}$
    }
    
    \textbf{return} $\Lambda_{r}$
}
\end{algorithm}

\subsection{Repairing Affected Vertices}
Now we propose a repair strategy to efficiently update the labels of affected vertices in order to reflect graph changes. The key idea is that, instead of conducting a full BFS on all vertices, we conduct a partial BFS from $b$ only on affected vertices. Further, to avoid unnecessary computations, we distinguish two kinds of affected vertices: (1) affected vertices that are \emph{covered} by other landmarks and can thus be easily repaired by removing an entry from their labels; (2) affected vertices whose labels need to be repaired with accurately calculated distances on a changed graph. The following lemma characterizes the first kind according to the definition of highway cover labelling.
\begin{lemma}\label{lem:prunable-vertex}
An affected vertex $v\in \Lambda_r$ is \emph{covered} by a landmark $r'\in R\backslash\{r\}$ iff $r'$ exists in $P_{G'}(v,r)$. If an affected vertex $v\in \Lambda_r$ is covered by $r'$, then any affected vertex $v'\in \Lambda_r$ satisfying $d_{G'}(r,v')=d_{G'}(r,v)+d_{G'}(v,v')$ must also be covered by $r'$.
\end{lemma}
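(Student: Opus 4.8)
The plan is to treat both parts of the statement as facts about the shortest-path metric on $G'$, using the defining property of highway cover labelling (Definition~\ref{def:highway-cover}) together with the fact that both label entries and the highway store \emph{exact} distances, i.e. $\delta_L(r',v)=d_{G'}(r',v)$ for $r'\in R$ and $\delta_H(r,r')=d_{G'}(r,r')$. Under this reading, ``$v$ is covered by $r'$'' means precisely that the entry for $r$ at $v$ can be recovered through $r'$, namely $d_{G'}(r,v)=d_{G'}(r',v)+d_{G'}(r,r')$, and I will show this is equivalent to $r'$ lying on some shortest path between $v$ and $r$ in $G'$.

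For the first (iff) part, I would argue each direction by the standard splitting and concatenation of shortest paths. If $r'\in P_{G'}(v,r)$, then by definition of a shortest path through an intermediate vertex we have $d_{G'}(v,r)=d_{G'}(v,r')+d_{G'}(r',r)$; substituting the exact-distance identities for labels and the highway rewrites the right-hand side as $\delta_L(r',v)+\delta_H(r,r')$, which is exactly the covering condition. Conversely, if $v$ is covered by $r'$ then $d_{G'}(v,r)=d_{G'}(v,r')+d_{G'}(r',r)$, and this equality certifies a $v$-to-$r$ walk through $r'$ whose length equals $d_{G'}(v,r)$, so some shortest $v$-$r$ path passes through $r'$, giving $r'\in P_{G'}(v,r)$.

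For the second (propagation) part, I would work entirely with the triangle inequality on $d_{G'}$. The hypotheses give $d_{G'}(r,v')=d_{G'}(r,v)+d_{G'}(v,v')$ (so $v$ lies on a shortest $r$-$v'$ path), and applying the first part to $v$ gives $d_{G'}(r,v)=d_{G'}(r,r')+d_{G'}(r',v)$ (so $r'$ lies on a shortest $r$-$v$ path). Chaining these yields $d_{G'}(r,v')=d_{G'}(r,r')+d_{G'}(r',v)+d_{G'}(v,v')$. The crucial intermediate step is to deduce that $v$ also lies on a shortest $r'$-$v'$ path: combining the chained equality with the triangle inequality $d_{G'}(r,v')\le d_{G'}(r,r')+d_{G'}(r',v')$ forces $d_{G'}(r',v)+d_{G'}(v,v')\le d_{G'}(r',v')$; since the triangle inequality also gives $d_{G'}(r',v')\le d_{G'}(r',v)+d_{G'}(v,v')$, we obtain the equality $d_{G'}(r',v')=d_{G'}(r',v)+d_{G'}(v,v')$. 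Substituting this back into the chained identity collapses it to $d_{G'}(r,v')=d_{G'}(r,r')+d_{G'}(r',v')$, i.e. $r'\in P_{G'}(v',r)$; the first part then certifies that $v'$ is covered by $r'$.

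The routine content is bookkeeping with shortest-path additivity, and the translation between ``covered'' and ``lies on a shortest path'' is immediate from the exact-distance identities. The one step that requires care, and which I expect to be the main obstacle, is extracting $d_{G'}(r',v')=d_{G'}(r',v)+d_{G'}(v,v')$ from the two hypotheses: this is exactly the point where the two ``on a shortest path'' relations must be composed through a two-sided triangle-inequality argument rather than simply concatenated, and everything else follows by direct substitution.
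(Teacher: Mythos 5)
Your proof is correct; note, however, that the paper provides no proof of this lemma at all---it is stated bare, as a characterization following ``according to the definition of highway cover labelling''---so there is no official argument to compare against, and your proposal supplies reasoning the paper leaves implicit. One point of interpretation: the italics on \emph{covered} suggest the authors may intend the first sentence as the definition of the term, in which case only the propagation claim is substantive. You instead took ``covered'' to mean that the entry for $r$ at $v$ is recoverable through $r'$ (i.e. $d_{G'}(r,v)=d_{G'}(r',v)+d_{G'}(r,r')$, exactly the condition under which $(r,\cdot)$ can be dropped from $L(v)$ while preserving Definition~\ref{def:highway-cover}), and proved this equivalent to $r'\in P_{G'}(v,r)$; that is the semantically useful reading---it is what the later minimality theorem relies on---and your equivalence argument via splitting and concatenating shortest paths is sound. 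Your propagation step is also correct, and you rightly flagged its one delicate point: the identity $d_{G'}(r',v')=d_{G'}(r',v)+d_{G'}(v,v')$ does not follow by concatenation alone but needs the two-sided triangle-inequality squeeze, after which substitution gives $d_{G'}(r,v')=d_{G'}(r,r')+d_{G'}(r',v')$ and hence $r'\in P_{G'}(v',r)$, so $v'$ is covered by $r'$.
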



By Lemma \ref{lem:prunable-vertex}, we can efficiently repair affected vertices $v \in \Lambda_r$ as follows. If $v$ is covered by a landmark $r'\in R\backslash\{r\}$ (i.e., one of the unaffected parents of $v$ does not contain $r$ in its label) and is also a landmark, we only update the highway; otherwise, we remove the entry of $r$ from $L(v)$. If $v$ is not covered by any $r'\in R\backslash\{r\}$, we add/modify the entry of $r$ in $L(v)$. If $v$ is a descendant of covered vertices, we simply remove the entry of $r$ from $L(v)$ (if exists).



\begin{algorithm}[t] 
\caption{Repairing affected vertices.}
\label{algo:algo-repair}
\SetKwFunction{FMain}{\textsc{RepairAffected}}
\SetKwProg{Fn}{Function}{}{end}
\Fn{\FMain{$G'$, $(a, b)$, $\Lambda_r$, $r$, $\Gamma$}}{

    $\mathcal{Q}_{uncovered} \gets \emptyset$, $\mathcal{Q}_{covered} \gets \emptyset$ \\
    $\pi \gets d_{G}(r, a)+1$ \\
    Enqueue $(b, \pi)$ to $\mathcal{Q}_{covered}$ if covered; otherwise to $\mathcal{Q}_{uncovered}$ \\
    \While{$\mathcal{Q}_{uncovered}$ is not empty}{
      \While{$(v, \pi) \in \mathcal{Q}_{uncovered}$ \text{at depth} $\pi$}{
        \ForAll{$w \in N(v)$ s.t. $w \in \Lambda_r$ at depth $\pi + 1$}{
          \uIf{$covered(w, \pi + 1)$}{
            \uIf{$w$ is a landmark}{
              $\delta_H(r, w) \gets \pi + 1$
            }\Else{
              Remove $r$ from $L(w)$
            }
            Enqueue $(w, \pi + 1)$ to $\mathcal{Q}_{covered}$
          }\Else{
            Add/Modify $\{(r,\pi + 1)\}$ in $L(w)$ \\
            Enqueue $(w, \pi + 1)$ to $\mathcal{Q}_{uncovered}$
          }
          Remove $w$ from $\Lambda_r$
        }
        Dequeue $(v, \pi)$ from $\mathcal{Q}_{uncovered}$
      }
      
      \While{$(v, \pi) \in \mathcal{Q}_{covered}$ at depth $\pi$}{
        \ForAll{$w \in N(v)$ s.t. $w \in \Lambda_r$ at depth $\pi + 1$}{
          Remove $r$ from $L(w)$ \\
          Remove $w$ from $\Lambda_r$ \\
          Enqueue $(w, \pi + 1)$ to $\mathcal{Q}_{covered}$
        }
        Dequeue $(v, \pi)$ from $\mathcal{Q}_{covered}$
      }
    }
    Remove entry $r$ from remaining vertices in $\mathcal{Q}_{covered}$
}
\end{algorithm}

Algorithm \ref{algo:algo-repair} describes our algorithm for repairing affected vertices. Given a graph $G$ with an inserted edge $(a, b)$ and a set of affected vertices $\Lambda_r$, we conduct a BFS w.r.t. a landmark $r$ starting from the vertex $b$ with its new distance $\pi = d_G(r, a)+ 1$ (Lines 3-4). We use two queues $\mathcal{Q}_{uncovered}$ and $\mathcal{Q}_{covered}$ to process uncovered and covered vertices, respectively. If $b$ is covered, we enqueue $(b, \pi)$ to $\mathcal{Q}_{covered}$ and remove the entry of $r$ from the labels of affected vertices (Line 25). Otherwise, we enqueue $(b, \pi)$ to $\mathcal{Q}_{uncovered}$ and start processing vertices in $\mathcal{Q}_{uncovered}$ (Line 5). For each vertex $v \in \mathcal{Q}_{uncovered}$ at depth $\pi$, we examine its affected neighbors $w$ at depth $\pi + 1$. If $w$ is covered, then if $w$ is a landmark, we update the highway (Line 10); otherwise we remove the entry of $r$ from $L(w)$ (Line 12) because there must exist another landmark in the shortest path from $w$ to $r$ and add $(w, \pi + 1)$ to $\mathcal{Q}_{covered}$ (Line 13). Otherwise, we add/modify the entry of $r$ with the new distance $\pi + 1$ in $L(w)$ and enqueue $w$ to $\mathcal{Q}_{uncovered}$ (Lines 15-16). After that, we remove $w$ from $\Lambda_r$ (line 17). Then, for each $(v, \pi) \in \mathcal{Q}_{covered}$, we remove $r$ from the labels of affected neighbors of $v$, remove these affected vertices from $\Lambda_r$ and enqueue them to $\mathcal{Q}_{covered}$ (Lines 19-24). We process these two queues, one after the other, until $\mathcal{Q}_{uncovered}$ is empty. Finally, we remove the entry of $r$ from the labels of the remaining vertices in $\mathcal{Q}_{covered}$ (Line 25).

\begin{example}
Figure~\ref{fig:incremental-algorithm} illustrates how our algorithm repairs labels as a result of inserting an edge $(2,5)$. The BFS for landmark $0$ is depicted in Figure \ref{fig:incremental-algorithm}(c), which jumps to vertex $5$ and repairs three affected vertices $\{5, 9, 10\}$. The vertices $\{8, 13, 14\}$ are covered by landmarks $4$ and $10$. Similarly, the BFS for landmark $10$ is depicted in Figure \ref{fig:incremental-algorithm}(e), in which vertices $\{0, 2\}$ are repaired and vertex $1$ is covered by landmarks $0$ and $4$.
\end{example}

\section{Theoretical Results}\label{sec:theory}
\begin{table*}[ht!]
 \centering
 \caption{Comparing the update time, query time and labelling size of our method with the baseline methods.}
 \label{table:performance}\vspace{-0.4cm}
 \begin{tabular}{| l || r r r | r r r | r r r |}  \hline
	\multirow{2}{*}{Dataset \hspace*{1cm}}&\multicolumn{3}{c|}{{Update Time} (ms)}&\multicolumn{3}{c|}{Query Time (ms)}&\multicolumn{3}{c|}{Labelling Size} \\\cline{2-10}
    & \hspace{0.2cm}$\textsc{IncHL}^{+}$ & \hspace{0.2cm}\textsc{IncFD} & \hspace{0.2cm}\textsc{IncPLL} & \hspace{0.2cm}$\textsc{IncHL}^{+}$ & \hspace{0.2cm}\textsc{IncFD} & \hspace{0.2cm}\textsc{IncPLL} & \hspace{0.3cm}$\textsc{IncHL}^{+}$ & \hspace{0.3cm}\textsc{IncFD} & \hspace{0.3cm}\textsc{IncPLL} \\

   \hline\hline
	Skitter & 0.194 & 0.444 & 2.05 & 0.027 & 0.019 & 0.047 & 42 MB & 153 MB & 2.44 GB \\
    Flickr & 0.006 & 0.074 & 1.73 & 0.007 & 0.012 & 0.064 & 34 MB & 152 MB & 3.69 GB \\
    Hollywood & 0.031 & 0.101 & 48 & 0.027 & 0.037 & 0.109 & 27 MB & 263 MB & 12.58 GB \\
    Orkut & 2.026 & 2.049 & - & 0.101 & 0.103 & - & 70 MB & 711 MB & - \\
    Enwiki & 0.134 & 0.163 & 5.91 & 0.054 & 0.035 & 0.071 & 82 MB & 608 MB & 12.57 GB \\
    Livejournal & 0.245 & 0.268 & - & 0.044 & 0.046 & - & 122 MB & 663 MB & - \\ \hline
    Indochina & 5.443 & 158 & 2018 & 0.737 & 0.839 & 0.063 & 81 MB & 838 MB & 18.64 GB \\
    IT & 95.92 & 224 & - & 1.069 & 1.013 & - & 854 MB & 4.74 GB & - \\
    Twitter & 0.027 & 0.134 & - & 0.863 & 0.177 & - & 1.14 GB & 3.83 GB & - \\
    Friendster & 0.159 & 0.419 & - & 0.814 & 0.904 & - & 2.43 GB & 9.14 GB & - \\
    UK & 11.49 & 384 & - & 3.443 & 5.858 & - & 1.78 GB & 11.8 GB & - \\
    Clueweb09 & 40.68 & - & - & 16.93 & - & - & 163 GB & - & - \\
   \hline
 \end{tabular}\vspace{-0.2cm}
\end{table*}
\begin{table}[h!]
\centering
\caption{Summary of datasets.}
\label{table:datasets}\vspace{-0.4cm}
\begin{tabular}{| l l | r r r r|} 
  \hline
  Dataset & Network & $|V|$ & $|E|$ & avg. deg & avg. dist \\
  \hline\hline
  Skitter & comp (u) & 1.7M & 11M & 13.081 & 5.1 \\
  Flickr & social (u) & 1.7M & 16M & 18.133 & 5.3 \\
  Hollywood & social (u) & 1.1M & 114M & 98.913 & 3.9 \\
  Orkut & social (u) & 3.1M & 117M & 76.281 & 4.2 \\
  Enwiki & social (d) & 4.2M & 101M & 43.746 & 3.4 \\
  Livejournal & social (d)& 4.8M & 69M & 17.679 & 5.6 \\ \hline
  Indochina & web (d)& 7.4M & 194M & 40.725 & 7.7 \\
  IT & web (d) & 41M & 1.2B & 49.768 & 7.0 \\
  Twitter & social (d) & 42M & 1.5B & 57.741 & 3.6 \\
  Friendster & social (u) & 66M & 1.8B & 55.056 & 5.0 \\ 
  UK & web (d) & 106M & 3.7B & 62.772 & 6.9 \\
  Clueweb09 & web (d) & 1.7B & 7.8B & 9.27 & 7.4  \\
  \hline
\end{tabular}\vspace{-0.2cm}
\end{table}

\textbf{Proof of correctness.~}For $G\hookrightarrow G'$ where our method $\textsc{IncHL}^{+}$ updates a highway cover labelling $\Gamma$ over $G$ into a highway cover labelling $\Gamma'$ over $G'$, we consider $\textsc{IncHL}^{+}$ to be \emph{correct} iff, whenever $Q(u,v, \Gamma)=d_{G}(u,v)$ holds for any two vertices $u$ and $v$ in $G$,  then $Q(u',v', \Gamma')=d_{G'}(u',v')$ also holds for any two vertices $u'$ and $v'$ in $G'$. We prove the theorem below for $\textsc{IncHL}^{+}$. 
\begin{theorem}\label{the:correctness}
$\textsc{IncHL}^{+}$ is correct. 
\end{theorem}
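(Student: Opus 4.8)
The plan is to reduce this query-level statement to a structural invariant on the labelling. By the static correctness of highway cover labelling \cite{farhan2018highly}, the hypothesis $Q(u,v,\Gamma)=d_G(u,v)$ for all pairs is tantamount to $\Gamma$ being a \emph{valid} highway cover labelling over $G$: its highway satisfies $\delta_H(r_1,r_2)=d_G(r_1,r_2)$ and its labels satisfy the cover property of Definition~\ref{def:highway-cover} with exact $G$-distances. The same static result guarantees that any pair $\Gamma'=(H',L')$ satisfying the analogous conditions over $G'$ answers every query exactly, i.e. $Q(u',v',\Gamma')=d_{G'}(u',v')$. Hence it suffices to prove the two invariants for the output $\Gamma'$: (i) $\delta_{H'}(r_1,r_2)=d_{G'}(r_1,r_2)$ for every landmark pair, and (ii) for every $v\in V\setminus R$ and $r\in R$, $d_{G'}(r,v)=\min\{\delta_{L'}(r_i,v)+\delta_{H'}(r,r_i)\mid(r_i,\delta_{L'}(r_i,v))\in L'(v)\}$. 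I would argue this one landmark at a time, since Algorithm~\ref{algo:inc-algo} processes the landmarks independently.

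First I would dispose of the unaffected vertices: if $v\notin\Lambda_r$ then $P_G(v,r)=P_{G'}(v,r)$ by Definition~\ref{def:affected_vertex}, so $d_{G'}(r,v)=d_G(r,v)$ and the entry of $r$ in $L(v)$ (if present) is left untouched by the algorithm and remains exact. Next I would show that $\textsc{FindAffected}$ returns exactly $\Lambda_r$ and, crucially, that the depth $\pi$ it assigns to each visited vertex $v$ equals $d_{G'}(r,v)$. The base case follows from the assumption $d_G(r,b)>d_G(r,a)$, which forces $d_{G'}(r,b)=d_G(r,a)+1=\pi$; the inductive step uses Lemma~\ref{lemma:aff_vs_insertion} to guarantee that every affected vertex has a shortest path in $G'$ through $(a,b)$, so its new distance is realised by a search that starts at $b$ and advances only along affected vertices, while Lemma~\ref{traverse-identifying-affected} justifies initialising that jumped BFS at $b$ with depth $d_G(r,a)+1$.

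With the exact new distances in hand, I would verify that $\textsc{RepairAffected}$ restores invariants (i)--(ii) on the affected set. For an \emph{uncovered} affected vertex $w$, the algorithm writes the entry $(r,\pi+1)$ with $\pi+1=d_{G'}(r,w)$, so its label carries the exact distance and the cover-property minimum in (ii) is attained at $r_i=r$. For a \emph{covered} affected vertex $w$, Lemma~\ref{lem:prunable-vertex} supplies a landmark $r'\neq r$ lying on a shortest $w$-to-$r$ path in $G'$; I would use this to show $d_{G'}(r,w)=\delta_{L'}(r',w)+\delta_{H'}(r,r')$, so (ii) is still met through $r'$ and the entry of $r$ may be safely removed. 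The second clause of the same lemma---every shortest-path descendant of a covered vertex is itself covered---shows that the propagation through $\mathcal{Q}_{covered}$, which strips the entry of $r$ from all such descendants, never deletes a genuinely needed entry. The landmark sub-case, where a covered $w\in R$ triggers $\delta_{H'}(r,w)\gets\pi+1=d_{G'}(r,w)$, establishes invariant (i).

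The main obstacle I expect is the covered-vertex analysis, where I must argue simultaneously that (a) deleting the entry of $r$ from a covered vertex $w$ does not violate the cover property---which requires that the covering value $\delta_{L'}(r',w)$ and the highway value $\delta_{H'}(r,r')$ used to recover $d_{G'}(r,w)$ are themselves already correct at the moment of deletion---and (b) that the interleaving of $\mathcal{Q}_{uncovered}$ and $\mathcal{Q}_{covered}$ processes vertices in non-decreasing order of $\pi$, so that an uncovered vertex is never repaired from a stale neighbour. Resolving (a) forces an induction that couples the label update for $r$ with the correctness of the entries for the covering landmark $r'$ and of the highway across the per-landmark iterations of Algorithm~\ref{algo:inc-algo}; Lemma~\ref{lem:prunable-vertex} is the key tool, but one must be careful that coverage is evaluated against $G'$ and against entries already in a consistent state. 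This coupling is the delicate heart of the argument, after which invariants (i)--(ii) hold for $\Gamma'$ and the static result yields $Q(u',v',\Gamma')=d_{G'}(u',v')$.
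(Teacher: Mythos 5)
Your proposal follows essentially the same route as the paper's own proof: show that \textsc{FindAffected} returns exactly $\Lambda_r$ (via Lemma~\ref{lemma:aff_vs_insertion}), then show that \textsc{RepairAffected} writes exact $G'$-distances into the labels of uncovered affected vertices, removes the entry of $r$ from covered ones (justified by Lemma~\ref{lem:prunable-vertex}), and updates the highway when a covered vertex is a landmark. If anything, your plan is more careful than the paper's one-paragraph argument, which leaves implicit the reduction to the cover invariants, the treatment of unaffected vertices, and the queue-interleaving and cross-landmark coupling issues you flag as the delicate heart of the argument.
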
\vspace*{-0.2cm}
\begin{proof} 
First, we prove that \texttt{FindAffected} returns the set of all affected vertices $\Lambda_r$ as a result of an edge insertion. $\textsc{IncHL}^{+}$ (Lines 7-8 of Algorithm \ref{algo:algo-affected}) guarantees that any vertex being added to $\mathcal{Q}$ has one shortest path to a landmark $r$ which goes through the inserted edge $(a,b)$. By Lemma \ref{lemma:aff_vs_insertion}, such vertices are affected vertices, and thus a vertex $v$ is added to $\mathcal{Q}$ in Algorithm \ref{algo:algo-affected} iff $v\in \Lambda_r$. Then, we prove that \texttt{RepairAffected} repairs $\Gamma=(H,L)$ s.t. (1) $(r, d_{G'}(r, v)) \in L(v)$ for $v \in \Lambda_r$, iff $P_{G'}(r,v)$ contains only one landmark $r$; (2) $\delta_H(r,r')=d_{G'}(r,r')$ for any $r'\in R \backslash\{r\}$. Starting from $b$ with new distance $\pi$, the distances of affected vertices in $\Lambda_r$ are iteratively inferred on $G'$ and reflected into their labels via $\mathcal{Q}_{uncovered}$ if these affected vertices are not covered (Lines 15-16 of Algorithm \ref{algo:algo-repair}). If an affected vertex $v$ is covered, it is kept in $\mathcal{Q}_{covered}$; if $v$ is also a landmark, $\delta_H(r,v)$ in $H$ is updated (Lines 9-10). Thus, the distance entry of $r$ is removed from the labels of affected vertices appearing in $\mathcal{Q}_{covered}$, whereas any vertex $v$ appearing in $\mathcal{Q}_{uncovered}$ must have $(r, d_{G'}(r,v))\in L(v)$.
\end{proof}

\noindent\textbf{Preservation of minimality.~}
It has been reported in \cite{farhan2018highly} that, given a graph $G$, a minimal highway cover labelling $\Gamma=(H,L)$ of $G$ can be constructed using an algorithm proposed in their work, i.e., $size(L')\geq size(L)$ holds for any $\Gamma'=(H,L')$ of $G$. For $G\hookrightarrow G'$ where $\textsc{IncHL}^{+}$ updates  $\Gamma$ over $G$ into  $\Gamma'$ over $G'$, we prove that $\textsc{IncHL}^{+}$ preserves the minimality of labelling.

\begin{theorem}
If $\Gamma$ is minimal on $G$, then $\Gamma'$ is minimal on $G'$.
\end{theorem}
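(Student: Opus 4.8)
The plan is to reduce the claim to a per-entry characterization of minimality and then verify that $\textsc{IncHL}^{+}$ reproduces that characterization on $G'$. First I would recall from \cite{farhan2018highly} the structural description of a minimal highway cover labelling: for $\Gamma=(H,L)$ that is minimal on $G$, an entry $(r,d_G(r,v))$ belongs to $L(v)$ if and only if no landmark $r'\in R\setminus\{r\}$ lies on any shortest path from $r$ to $v$ (equivalently, $v$ is \emph{uncovered} w.r.t. $r$ in the sense of Lemma~\ref{lem:prunable-vertex}). Since the distance value of a present entry is forced to $d_G(r,v)$ by validity of the cover property, this condition pins down the minimal labelling uniquely, so it suffices to show that $L'$ satisfies the same condition on $G'$, namely $(r,d_{G'}(r,v))\in L'(v)$ iff no $r'\neq r$ lies on a shortest $r$-$v$ path in $G'$. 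I would also note that the highway is kept correct, $\delta_{H'}(r,r')=d_{G'}(r,r')$ for all $r,r'\in R$ (part of Theorem~\ref{the:correctness}), so that minimality of $L'$ is well defined relative to $H'$.

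Next I would split the analysis according to whether a pair $(r,v)$ is affected. For an unaffected pair, i.e. $v\notin\Lambda_r$, Definition~\ref{def:affected_vertex} gives $P_G(v,r)=P_{G'}(v,r)$, so both $d_{G'}(r,v)=d_G(r,v)$ and the covered/uncovered status of $v$ w.r.t. $r$ are identical in $G$ and $G'$. Since $\textsc{IncHL}^{+}$ only ever touches labels of vertices in $\Lambda_r$, the entry of $r$ in $L(v)$ is carried over unchanged, and minimality of $\Gamma$ on $G$ yields exactly the desired characterization on $G'$ for these pairs. In particular, an edge insertion can never force a \emph{new} entry at an unaffected vertex, because an unchanged shortest-path set leaves the coverage condition unchanged.

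For an affected vertex $v\in\Lambda_r$ I would invoke the repair guarantee established in the proof of Theorem~\ref{the:correctness}: after \texttt{RepairAffected}, $(r,d_{G'}(r,v))\in L'(v)$ exactly when $v$ is processed through $\mathcal{Q}_{uncovered}$, which by Lemma~\ref{lem:prunable-vertex} happens iff no landmark $r'\neq r$ occurs in $P_{G'}(v,r)$; conversely, every affected vertex routed to $\mathcal{Q}_{covered}$ has its entry of $r$ removed (or, if it is itself a landmark, only the highway is updated). Thus for affected pairs too, $(r,\cdot)\in L'(v)$ iff $v$ is uncovered w.r.t. $r$ in $G'$, with the stored distance equal to $d_{G'}(r,v)$. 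Combining the two cases, $L'$ stores an entry for $(r,v)$ precisely when the minimality characterization on $G'$ demands it, so $L'$ coincides with the unique minimal labelling of $G'$ under $H'$, which proves $\Gamma'$ minimal.

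The hard part will be the affected case, specifically justifying that the algorithm's operational notion of coverage, computed by a partial BFS over $\Lambda_r$ using the two queues and the test on unaffected parents, agrees with the path-based condition of Lemma~\ref{lem:prunable-vertex} for \emph{every} affected vertex simultaneously. This requires the propagation clause of Lemma~\ref{lem:prunable-vertex} (any affected $v'$ on a shortest continuation beyond a covered $v$ is itself covered) to faithfully carry coverage to all descendants, and it requires ruling out the degenerate possibilities that a vertex marked uncovered actually admits a covering landmark on some alternative shortest path, or that a removed entry was in fact still required. I would discharge these by an induction on the BFS depth $\pi$, using Lemma~\ref{traverse-identifying-affected} to anchor the base layer at $b$ and the correctness of the recovered distances from Theorem~\ref{the:correctness} to drive the inductive step.
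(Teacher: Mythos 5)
Your proposal is correct and follows essentially the same route as the paper's own proof: split vertices into unaffected ones (whose labels, distances, and coverage status are untouched since $P_G(v,r)=P_{G'}(v,r)$) and affected ones, for which Lemma~\ref{lem:prunable-vertex} together with the repair guarantee of Theorem~\ref{the:correctness} shows the entry of $r$ is kept exactly when no other landmark lies on a shortest $r$--$v$ path in $G'$. Your version is more careful than the paper's terse argument --- notably in making explicit the entry-wise characterization of minimality from \cite{farhan2018highly} and in flagging (and sketching an induction for) the gap between the algorithm's operational coverage test and the path-based coverage condition, which the paper simply asserts.
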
\vspace*{-0.2cm}
\begin{proof}
By Lemma \ref{lem:prunable-vertex}, $(r, d_{G'}(r, v)) \in L(v)$ for $v\in\Lambda_r$ iff $P_{G'}(r,v)$ does not contain any other landmark $R\backslash\{r\}$; otherwise we remove the entry of $r$ from the label of $v$ (Line 12, 21 and 25 of Algorithm \ref{algo:algo-repair}). Thus, the labels of all affected vertices must be minimal after applying $\textsc{IncHL}^{+}$. For unaffected vertices, their labels should remain unchanged. Hence,  $\Gamma'$ must be minimal.
\end{proof}

\noindent\textbf{Complexity analysis.~}Let $m$ be the total number of affected vertices, $l$ be the average size of labels (i.e. $l=size(L)/|V|$), and $d$ be the average degree of vertices. For a landmark, Algorithm \ref{algo:algo-affected} takes $O(mdl)$ time to find all affected vertices and Algorithm \ref{algo:algo-repair} takes $O(md)$ to repair the labels of all affected vertices. We omit $l$ from $O(md)$ for Algorithm \ref{algo:algo-repair}  because distances for all unaffected neighbors of affected vertices are stored in Algorithm \ref{algo:algo-affected}. Therefore, $\textsc{IncHL}^{+}$ has time complexity $O(|R| \times mdl)$. In our experiments, we notice that $m$ is usually orders of magnitudes smaller than $|V|$ and $l$ is also significantly smaller than $|R|$.

\medskip
\noindent\textbf{Directed and weighted graphs.~}
For directed graphs, we can store sets of forward and backward labels, namely $L_f(v)$ and $L_b(v)$, for each vertex $v$ which contain pairs $(r_i, \delta_{r_iv})$ from forward and backward BFSs w.r.t. each landmark. Accordingly, we can store forward and backward highways $H_f$ and $H_b$. Then, we conduct two BFSs to update these labels and highways: one in the forward direction and the other in the backward direction. Our method can also be easily extended to handling weighted graphs by using Dijkstra’s algorithm instead of BFSs.

\section{EXPERIMENTS}\label{section:experiments}
We have evaluated our method to answer the following questions: (Q1) How efficiently can our method perform against state-of-the-art methods? (Q2) How does the number of landmarks affect the performance of our method? (Q3) How does our method scale to perform updates occurring rapidly in large dynamic networks? 

\smallskip
\noindent\textbf{Datasets.}~We used 12 large real-world networks as detailed in Table \ref{table:datasets}. These networks are accessible at Stanford Network Analysis Project \cite{leskovec2015snap}, Laboratory for web Algorithmics \cite{BoVWFI}, Koblenz Network Collection \cite{kunegis2013konect}, and Network Repository \cite{rossi2015network}. We treated these networks as undirected and unweighted graphs. 

\smallskip
\noindent\textbf{Updates and queries.}~For each network, we randomly sampled 1,000 pairs of vertices as edge insertions, denoted as $E_I$, where $E_I \cap E = \emptyset$ to evaluate the average update time. Further, we evaluate the average query time with 100,000 randomly sampled pairs of vertices from each network and report the labelling size after reflecting all the updates.

\smallskip
\noindent\textbf{Baseline methods.}~We compared our method ($\textsc{IncHL}^{+}$) with the state-of-the-art methods: (1) \textsc{IncPLL}: an online incremental algorithm proposed in \cite{akiba2014dynamic} which is based on the 2-hop cover labelling to answer distance queries; (2) \textsc{IncFD}: an online incremental algorithm proposed in \cite{hayashi2016fully} which combines a 2-hop cover labelling with a graph traversal algorithm to answer distance queries. 
The codes of these methods were provided by their authors and implemented in C++. We used the same parameter settings for these methods as suggested by their authors unless otherwise stated. For a fair comparison, following \cite{hayashi2016fully} we set $|R|=20$ for \textsc{IncFD} and our methods, except for Clueweb09 which has $|R|=150$ due to its billion-scale vertices. Our methods were implemented in C++11 and compiled using gcc 5.5.0 with the -O3 option. We performed all the experiments using a single thread on Linux server (Intel Xeon W-2175 with 2.50GHz and 512GB of main memory). 
\subsection{Performance Comparison}

\begin{figure*}[t!]
\centering
\includegraphics[width=0.9\textwidth]{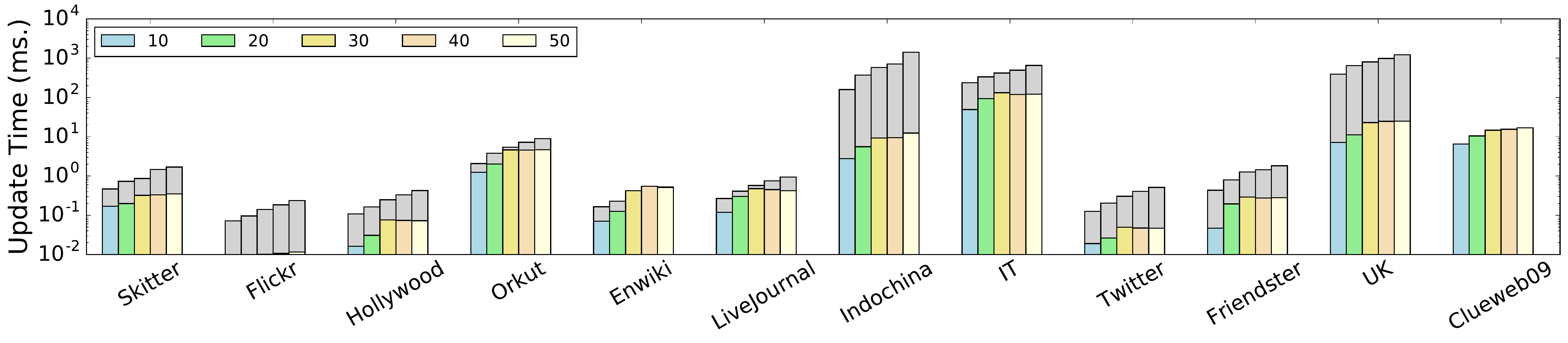}\vspace{-0.6cm}
\caption{Average update time of our method $\textsc{IncHL}^{+}$ (in colored bars) and the baseline method \textsc{IncFD} (in colored plus grey bars) under 10-50 landmarks. There are no results of \textsc{IncFD} for Clueweb09 due to the scalability issue.}
\label{fig:varying-landmarks}\vspace{-0.3cm}
\end{figure*}
\begin{figure*}[t!]
\includegraphics[width=0.9\textwidth]{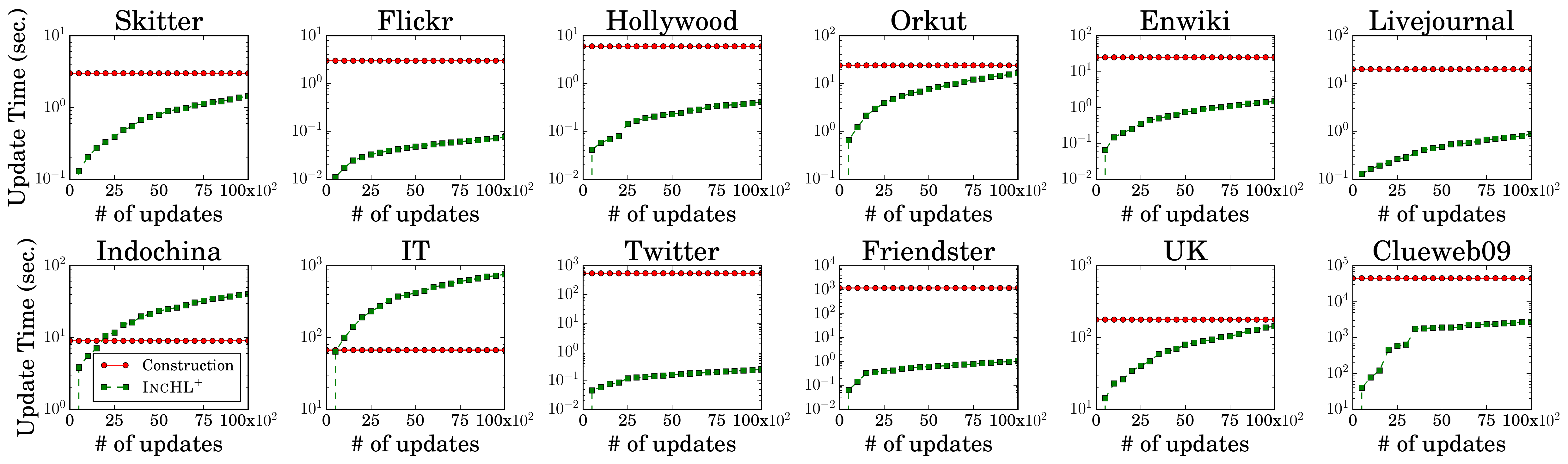}\vspace{-0.45cm}
\caption{Update time of $\textsc{IncHL}^{+}$ for performing up to 10,000 updates against construction time of labelling from scratch.} 
\label{fig:aut_batchsize_inc}\vspace{-0.35cm}
\end{figure*}

\subsubsection{Update Time}
Table \ref{table:performance} shows that the average update time of our method $\textsc{IncHL}^{+}$ outperforms the state-of-the-art methods \textsc{IncFD} and \textsc{IncPLL} on all datasets. This is due to a novel repair strategy utilized by $\textsc{IncHL}^{+}$. Further, only $\textsc{IncHL}^{+}$ can scale to very large networks with billions of vertices and edges. \textsc{IncFD} fails to scale to Clueweb09, and \textsc{IncPLL} fails for 7 out of 12 datasets due to very high preprocessing time and memory requirements.

\subsubsection{Labelling Size} \label{subsection:ls}
From Table \ref{table:performance}, we see that $\textsc{IncHL}^{+}$ has significantly smaller labelling sizes than \textsc{IncFD} and \textsc{IncPLL}. When updates occur on a graph, the labelling sizes of \textsc{IncFD} and $\textsc{IncHL}^{+}$ remain stable because their average label sizes are bounded by the size of landmarks set (i.e. $|R|$). Moreover, \textsc{IncFD} stores complete shortest path trees w.r.t. landmarks; while $\textsc{IncHL}^{+}$ stores pruned shortest-path trees which lead to labelling of much smaller sizes than IncFD. For \textsc{IncPLL}, the labelling sizes may increase because \textsc{IncPLL} does not remove outdated and redundant entries.

\subsubsection{Query Time}
In Table \ref{table:performance} the query times of $\textsc{IncHL}^{+}$ are comparable with \textsc{IncFD} and \textsc{IncPLL}. It has been shown in \cite{d2019fully} that query time depends on labelling size. As discussed in Section \ref{subsection:ls}, the update operations do not considerably affect the labelling sizes of \textsc{IncFD} and $\textsc{IncHL}^{+}$, and thus their query times remain stable. However, the query times for \textsc{IncPLL} may increase over time because of the presence of outdated and redundant entries, which result in labelling of increasing size.

\subsection{Performance with Varying Landmarks}
Figure \ref{fig:varying-landmarks} shows the average update time of our method $\textsc{IncHL}^{+}$ against the baseline method \textsc{IncFD} under varying landmarks, i.e., $|R| \in [10, 20, 30, $ $40, 50]$. As we can see, $\textsc{IncHL}^{+}$ outperforms \textsc{IncFD} on all the datasets against almost every selection of landmarks. We can also see the performance gap remains stable for most of the datasets when increasing the number of landmarks. This empirically verifies the efficiency of our repair strategy.

\subsection{Scalability Test}
We conducted a scalability test on the update time of our method $\textsc{IncHL}^{+}$, by starting with 500 updates and then iteratively adding 500 updates each time until 10,000 updates. Figure \ref{fig:aut_batchsize_inc} shows the results. We observe that the update time of $\textsc{IncHL}^{+}$ on almost all the datasets is considerably below the construction time of labelling. On Indochina and IT, $\textsc{IncHL}^{+}$ performs relatively worse because these networks have large average distances as depicted in Table \ref{table:datasets}, which lead to high percentages of affected vertices as shown in Figure \ref{fig:percentage_av}. In contrast, $\textsc{IncHL}^{+}$ performs well on graphs with small average distances such as Twitter. Overall, $\textsc{IncHL}^{+}$ can scale to perform a large number of updates efficiently.

\section{CONCLUSION}\label{section:conclusion}
This paper has studied the problem of answering distance queries on large dynamic networks. Our proposed algorithm exploits properties of a recent labelling technique called highway cover labelling \cite{farhan2018highly} to efficiently process incremental graph updates, and can preserve the minimality property of labelling after each update operation. We have empirically evaluated the efficiency and scalability of the proposed algorithm. The results show that our proposed algorithm outperforms the state-of-the-art methods. In future, we plan to further investigate the effects of decremental updates on graphs since they are also commonly used in practice. 

\bibliographystyle{ACM-Reference-Format}
\bibliography{sample-base}

%

\end{document}